\tikzset{
  treenode/.style = {align=center, inner sep=0pt, text centered,
    font=\sffamily},
  arn_t/.style = {treenode, circle, white, font=\sffamily\bfseries, fill=SeaGreen,
   text width=0.7em},
  arn_n/.style = {treenode, circle, white, font=\sffamily\bfseries, draw=black,
    fill=black, text width=0.7em},
  arn_r/.style = {treenode, circle, red, fill=IndianRed,
    text width=0.7em, very thick},
  arn_x/.style = {treenode, rectangle, draw=black,
    minimum width=0.5em, minimum height=0.5em}
        photon/.style={decorate, decoration={snake}, draw=red},
    electron/.style={draw=blue, postaction={decorate},
        decoration={markings,mark=at position .55 with {\arrow[draw=blue]{>}}}},
    gluon/.style={draw=DodgerBlue},
    m1/.style={draw=DarkOrchid, thick},
    m2/.style={draw=Chocolate, thick},
    m3/.style={draw=YellowGreen, thick},
    leaf/.style = {draw=IndianRed, circle},
    root/.style = {draw=SeaGreen, circle},
    internal/.style = {draw=Black, circle},
}
 \providecommand\@dotsep{5}
 \def\listtodoname{}
 \def\listoftodos{\@starttoc{tdo}\listtodoname}
\newcounter{nmcomment}
\newcounter{pdcomment}
\renewcommand{\leq}{\leqslant}
\renewcommand{\bar}[1]{\ensuremath{\overline{#1}}\xspace}
\newcommand{\true}{\textsc{true}\xspace}
\newcommand{\pr}{\ensuremath{\prime}\xspace}
\newcommand{\FPT}{\ensuremath{\mathsf{FPT}}\xspace}
\newcommand{\NPH}{\ensuremath{\mathsf{NP}}-hard\xspace}
\newcommand{\NPC}{\ensuremath{\mathsf{NP}}-complete\xspace}
\newcommand{\CC}{\ensuremath{\mathcal C}\xspace}
\newcommand{\FF}{\ensuremath{\mathcal F}\xspace}
\newcommand{\II}{\ensuremath{\mathcal I}\xspace}
\newcommand{\UU}{\ensuremath{\mathcal U}\xspace}
\newcommand{\MCIS}{\textsc{MCIS}}
\newcommand{\MCISfull}{\textsc{Multi-Colored Independent Set}}
\newcommand{\LES}{\textsc{LNES}}
\newcommand{\LESfull}{\textsc{Linear Near-Exact Satisfiability}}
\newcommand{\FBTDE}{\textsc{FBT-DE}}
\newcommand{\FBTDV}{\textsc{FBT-DV}}
\newcommand{\longonly}[1]{}
\renewcommand{\longonly}[1]{#1}
\newcommand{\shortonly}[1]{}
\renewcommand{\shortonly}[1]{#1}
\title{Deleting to Structured Trees}
\titlerunning{Deleting to Structured Trees}
\author{Pratyush Dayal}{Indian Institute of Technology, Gandhinagar, India}{pdayal@iitgn.ac.in}{https://orcid.org/0000-0001-5173-2250}{}
\author{Neeldhara Misra}{Indian Institute of Technology, Gandhinagar, India}{neeldhara.m@iitgn.ac.in}{https://orcid.org/0000-0003-1727-5388}{}
\authorrunning{P. Dayal and N. Misra}
\keywords{Full Binary Trees, Feedback Vertex Set, NP-hardness, Exact Satisfiability}
\begin{document}

\maketitle

\begin{abstract}
  We consider a natural variant of the well-known \textsc{Feedback Vertex Set} problem, namely the problem of deleting a small subset of vertices or edges to a full binary tree. This version of the problem is motivated by real-world scenarios that are best modeled by full binary trees. We establish that both versions of the problem are \NPH{}, which stands in contrast to the fact that deleting edges to obtain a forest or a tree is equivalent to the problem of finding a minimum cost spanning tree, which can be solved in polynomial time. We also establish that both problems are \FPT{} by the standard parameter.
\end{abstract}

\section{Introduction}






The \textsc{Feedback Vertex Set} (FVS) problem asks for a smallest subset $S$ of vertices in an undirected graph $G$ to be removed such that the graph, $G \setminus S$, becomes acyclic. This problem was one of the first problems shown to be \NPC{}~\cite{GareyJ79}, and has applications to problems that arise in several areas. These applications include, but are not limited to, operating systems, database systems and VLSI chip design. Consequently, the FVS problem has been widely studied in the context of exact, parameterized and approximation algorithms.

Several variations of the FVS theme have also emerged over the years. For instance, one line of work considers the task of ``deleting to specialized forests'', such as forests of pathwidth one~\cite{CyganPPW12,PhilipRV10} or forests whose connected components are stars of bounded degree~\cite{GanianKO18}. In this case, the forests of pathwidth one turn out to be graphs whose connected components are caterpillars.

Meanwhile, another line of work is the \textsc{Tree Deletion Set} (TDS) problem that considers the issue of the connectivity of the structure after the solution has been deleted. In particular, the TDS problem asks for a smallest subset of vertices $S$ such that $G \setminus S$ is a tree~\cite{GiannopoulouLSS16,RamanSS13}. We remark that the NP-completeness of this TDS problem follows from a general result of Yannakakis~\cite{Yannakakis79}. To state this result, recall that a \emph{property} $\pi$ is a class of graphs, and we will say that $\pi$ is satisfied by, or is true for, a graph $G$ if $G \in \pi$. A property is said to be \emph{non-trivial} if it is satisfied for at least one graph and false for at least one graph; it is \emph{interesting} if the property is true for arbitrarily large graphs and is \emph{hereditary on induced subgraphs} if the deletion of any node from a graph in $\pi$ always results in a graph that is in $\pi$. The result in question states that the problem of finding a maximum connected subgraph satisfying a property $\pi$ is \NPH{} for any non-trivial and interesting property that is hereditary on induced subgraphs.

In this work, we pose a variation of FVS that is in the spirit of a combination of the variations that we have alluded to; here, however, we are looking for a connected object with additional structure. Specifically, we consider the problem of deleting to a full binary tree. We recall that a full binary tree is a tree that has exactly one vertex of degree two and no vertex of degree more than three. Consider the problem of optimally deleting to a full binary tree, posed in the language of the theorem of Yannakakis~\cite{Yannakakis79} stated above, which is to find a maximum connected subgraph that satisfies a certain property. Observe that the property in question could be defined as the property of not having cycles, having exactly one vertex of degree two and no vertex of degree more than three. Note that this property is not hereditary on induced subgraphs: in particular, the deletion of a leaf from a graph that has the property will lead to a violation of the property. In our first result, we explicitly establish the NP-hardness of this problem by reducing from a variant of the \textsc{Independent Set} problem.

In addition, we also consider the edge deletion version of the question above. Recall that for a given connected graph on $n$ vertices and $m$ edges, deleting a smallest subset of edges to obtain a tree is straightforward: it is clear that we have to remove every edge that does not belong to a spanning tree, so the size of the solution is always $(m-(n-1))$. In fact, this problem can be solved in polynomial time even when the edges have weights and we seek a subset of edges of smallest total weight, whose removal results in a tree. It is straightforward to see that any such solution is the complement of a maximum spanning tree and thus, can be found in polynomial time.

In a somewhat surprising twist, we show that the problem of deleting a subset of edges of minimum total weight to obtain a full binary tree is, in fact, \NPC{}. To establish some intuition for why this is true, we briefly sketch a simple reduction from the problem of \textsc{Exact Cover by 3-Sets} to the closely related problem of deleting edges to obtain a full ternary tree.

A ternary tree is a tree where every non-leaf vertex, except the root, is exactly of degree four, while the root has degree three. Let $\FF := \{S_1, \ldots, S_p\}$ be a family of sets of size three over the universe $\UU := \{x_1, \ldots, x_q\}$. The goal is to find a subfamily of disjoint sets whose union is $\UU$. We create a full ternary tree $T$ with $p$ leaves labeled $\{t_1, \ldots, t_p\}$, and set the weight of the edges of $T$ to $B$, a quantity that we will specify later. Then, we introduce for every element $x_i$ in the universe a vertex $v_i$ that is adjacent to $t_j$, if and only if $x_i \in S_j$. The edges between the leaves of $T$ and the vertices corresponding to the elements of $\UU$ have unit weights. We also set $B = 3p-q+1$. It is easy to verify that this graph has a solution of cost $3p-q$ if and only if the system $\UU$ has an exact cover, as desired.

It turns out that establishing the hardness of the problem of deleting to full binary trees is non-trivial, and this is one of our main contributions. We reduce from a fairly restrained version of the \textsc{Satisfiability} problem, the hardness of which is inspired by a reduction in~\cite{ArkinBCCKMS15} and is of independent interest.  We note that we deal with the weighted versions of the problems considered, and we also fix a choice of root vertex as part of the input. Finally, we also note that both the problems we propose above are fixed-parameter tractable, when parameterized by the solution size. To this end, we describe a natural branching algorithm and remark that most preprocessing rules that work in a straightforward manner for \textsc{Feedback Vertex Set} fail when applied as-is to our problem. In particular, it is not trivial to delete degree-one vertices or short-circuit vertices of degree two.


We believe that the problem we propose and the study we undertake has considerable practical motivation. One of the applications of FVS and related problems is to understand noisy datasets. For example, let us say that we expect the data to have a certain structure, but errors in the measurement cause the data at hand not to have the properties expected by said structures. In this context, one approach will be to identify and eliminate the noise - for acyclic structures, that could translate identifying a FVS of small cost. Therefore, for scenarios where the data corresponds to full binary trees, for instances in the case of phylogenetic trees, the problem we present here will be a more relevant model.


\section{Preliminaries}
\label{sec:prelims}

We follow standard notation and terminology from parameterized complexity~\cite{CyganFKLMPPS15} and graph theory~\cite{diestelbook}; we use $[n]$ to denote the set $\{1,2,\ldots,n\}$. We now turn to the definitions of the problems that we consider. 


\begin{defproblem}{Full Binary Tree Deletion by Vertices (FBT-DV)}{A graph $G=(V,E)$, a vertex $r \in V$, vertex weights $w: V \rightarrow \mathbb{R}^+$, and $k \in \mathbb{Z}^+$.}{Does $G$ have a subset $S \subseteq V$ of total weight at most $k$ such that $G \setminus S$ is a full binary tree rooted at $r$?}
\end{defproblem}

The problems of \textsc{Full Binary Tree Deletion by Edges} (FBT-DE), \textsc{Complete Binary Tree Deletion} (by edges or vertices) and \textsc{Binary Tree Deletion} (by edges or vertices) can be defined analogously. Our focus in this contribution will be mainly on~\FBTDV{} and~\FBTDE{}. 

The \textsc{Multi-Colored Independent Set} problem is the following.

\begin{defparproblem}{Multi-Colored Independent Set (\MCIS)}{A graph $G=(V,E)$ and a partition of $V = (V_1, \ldots, V_k)$ into $k$ parts.}{$k$}{Does there exist a subset $S \subseteq V$ such that $S$ is independent in $G$ and for every $i \in [k]$, $|V_i \cap S| = 1$?}
\end{defparproblem}

\section{NP-hardness}
\label{sec:nph}

In this section, we establish that the problems of deleting to full binary trees via vertices or edges are \NPC{}. We first describe the hardness for the vertex-deletion variant.

\begin{theorem}
\FBTDV{} is \NPC{}.
\end{theorem}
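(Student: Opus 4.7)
The plan is to prove membership in \NPshort{} via a routine certificate-verification argument, and then establish hardness by a polynomial-time reduction from \MCIS{}, which is well-known to be \NPH{}. Given a candidate deletion set $S$, one checks in polynomial time that $G \setminus S$ is connected, acyclic, contains $r$, has a unique vertex of degree two (namely $r$), and no vertex of degree more than three.

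For the reduction I start with an \MCIS{} instance $(G, V_1, \ldots, V_k)$ and build an \FBTDV{} instance $(G', r, w, k')$ as follows. I introduce a ``skeleton'' meant to survive in every solution: a root $r$; backbone vertices $u_1, \ldots, u_{k-1}$ with $r$ adjacent to $u_1$ and $u_i$ adjacent to $u_{i+1}$ for $1 \le i \le k-2$; anchor vertices $w_1, \ldots, w_k$ with $r$ adjacent to $w_1$, each $u_i$ adjacent to $w_{i+1}$ for $1 \le i \le k-2$, and $u_{k-1}$ adjacent to both $w_k$ and a distinguished leaf $z$; and for each anchor a pendant $w_i'$. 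Every skeleton vertex receives weight $W := |V(G)| + 1$. I then add the vertices of $V(G)$ with unit weights, include all edges of $G$ among them, and for each $i$ join $w_i$ to every vertex of $V_i$. The budget is set to $k' := |V(G)| - k$.

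The key design insight is that on its own the skeleton is almost a full binary tree rooted at $r$: the root has degree $2$, every backbone vertex has degree $3$, and the vertices $z, w_1', \ldots, w_k'$ are leaves, while each anchor $w_i$ has skeleton-degree only $2$ and is waiting for a third neighbour. Because the skeleton is heavy, any solution $S$ of weight at most $k'$ must lie entirely inside $V(G)$; and because each $w_i$ has degree $|V_i|+2$ in $G'$ and must end up with degree $3$, the solution must remove exactly $|V_i|-1$ vertices of each $V_i$. Summing gives $|S| = |V(G)| - k = k'$ exactly, so $V(G) \setminus S$ contains exactly one vertex $v_i^\star$ per colour class. These vertices become leaf-children of their anchors, and any surviving edge of $G$ between two of them would close a cycle through the corresponding anchors, so $\{v_1^\star, \ldots, v_k^\star\}$ must be independent in $G$. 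Conversely, any \MCIS{} solution $\{v_1^\star, \ldots, v_k^\star\}$ yields $S = V(G) \setminus \{v_1^\star, \ldots, v_k^\star\}$ of weight exactly $k'$ whose deletion produces the intended full binary tree.

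The chief obstacle I expect is the design of the skeleton itself: it must simultaneously be heavy enough to force $S \subseteq V(G)$, pre-assign the correct full-binary-tree degree to every non-anchor vertex, and leave each anchor short by exactly one, so that completing it to degree $3$ corresponds to choosing precisely one vertex of its colour class (not zero or two). The caterpillar-style backbone combined with an anchor-and-pendant at each colour class threads this needle; once it is in place the weighted budget handles the counting, and the independence condition emerges essentially for free from acyclicity.
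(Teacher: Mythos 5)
Your reduction is correct, and while it starts from the same source problem as the paper (\MCIS{}), the gadgetry is genuinely different. The paper builds an unweighted construction: it embeds the colour classes as third children of internal nodes of a complete binary tree, adds two ``guard'' vertices per colour class adjacent to all essential vertices of that class, and then needs a fairly delicate sequence of structural claims in the reverse direction (parents of partner vertices survive, each triple $\{u^i_j,a^i_j,b^i_j\}$ loses exactly one member, guards survive, at least one essential vertex per class survives) before independence falls out of acyclicity. You instead exploit the fact that the problem as defined carries vertex weights: by making the caterpillar skeleton too heavy to touch, you force $S \subseteq V(G)$ outright, and the degree constraint at each anchor $w_i$ (degree at least $2$ from its protected parent and pendant, not the root, hence exactly $3$) immediately pins $|S \cap V_i| = |V_i|-1$; the budget then closes the counting and independence follows from acyclicity exactly as in the paper. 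Your argument is shorter and the case analysis essentially disappears, which is what the heavy weights buy you; what the paper's construction buys in exchange is that it establishes hardness already for the unit-weight (effectively unweighted) version of \FBTDV{}, which is a strictly stronger statement and matters for the authors' stated goal of extending the results to unweighted inputs. Two cosmetic points you may wish to tidy: handle the degenerate cases $k\le 1$ and $|V(G)|=k$ (where your budget $k'$ would be $0 \notin \mathbb{Z}^+$) by padding, and note explicitly that membership in \NPshort{} also requires checking connectivity of $G\setminus S$, which you do mention.
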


\begin{proof}
We reduce from \MCISfull{}~\cite[Corollary 13.8]{CyganFKLMPPS15}. Let $(G,k)$ be an instance of \MCIS{} where $G = (V,E)$ and further, let $V = (V_1, \ldots, V_k)$ denote the partition of the vertex set $V$. We assume, without loss of generality, that $|V_i| = n$ for all $i \in [k]$. Specifically, we denote the vertices of $V_i$ by $\{v_1^i, \ldots, v_n^i\}$. We are now ready to describe the reduced instance of FBT-DV, which we denote by $(H,\ell)$.

To begin with, let $H$ be a complete binary tree with $2nk$ leaves, where a complete binary tree is a full binary tree with $2^w$ vertices at distance $w$ from the root for all $w \in [d-1]$, where $d$ the distance between the root and the leaf furthest away from the root. We denote these leaf vertices as: $$\left(\cup_{1 \leq i \leq k}\{a_1^i, \ldots, a_n^i\}\right) \bigcup \left(\cup_{1 \leq i \leq k}\{b_1^i, \ldots, b_n^i\}\right),$$

where, for all $i \in [k]$ and $j \in [n]$, $a_j^i$ and $b_j^i$ are siblings, and their parent is denoted by $p_j^i$.  We refer to this as the \emph{backbone}, to which we will now add more vertices and edges.

For each $i \in [k]$ and $j \in [n]$, we now introduce a third child of $p_j^i$, which we denote by $u_j^i$. We refer to the $u$'s as the \emph{essential} vertices, while its siblings (the $a$'s and the $b$'s) are called \emph{partners}. For all $1 \leq i \leq k$, we also introduce two \emph{guards}, denoted by $x_i$ and $y_i$, which are adjacent to all the essential vertices of type $i$, that is, all $u_j^i$ for $j \in [n]$. Finally, we ensure that the graph induced on the essential vertices is a copy of $G$, more precisely, we have:

$$(u^r_i, u^s_j) \in E(H) \text{ if and only if } (v^r_i, v^s_j) \in E(G) \text{ for all } i \in [k] \text{ and } j \in [n].$$

We set $\ell = nk$. This completes the construction. We now turn to a proof of equivalence.

\paragraph*{The forward direction.} If $S \subseteq V$ is a multi-colored independent set, then consider the subset $S^*$ given by all the essential vertices corresponding to $V \setminus S$, along with the partner vertices $a^i_j$ for each $(i,j)$, for which $v^i_j$ belongs to $S$. It is easy to verify that the proposed set consists of $nk$ vertices. Observe that the deletion of $S^*$ leaves us with a full binary tree where each $p_j^i$ now has two children - either two partner vertices (for vertices not in $S$) or one essential vertex along with one partner vertex (for vertices in $S$). Further, each pair of guards of type $i$ now has an unique parent, which is the essential vertex corresponding to the vertex given by $S \cap V_i$. The essential vertices have degree exactly three because their only other neighbors in $H$ were essential vertices corresponding to neighbors in $G$, but the presence of any such vertex in $H \setminus S^*$ will contradict the fact that $S$ induces an independent set in $G$. This concludes the argument in the forward direction.

\paragraph*{The reverse direction.} Let $S^*$ be a subset of $V(H)$ such that $H \setminus S^*$ is a full binary tree. We claim that $ S^* \cap \{p^i_j \mid 1 \leq i \leq k \text{ and } 1 \leq j \leq n\} = \emptyset$, since the deletion of any parent of a partner vertex will result in the corresponding partner vertex becoming isolated in $H \setminus S^*$--- which leads to a contradiction when we account for the budget constraint on $S^*$. Since all the parents of partner vertices survive and have degree four in $H$, it follows that at least one of its neighbors must belong to $S^*$. In particular, we claim that for every $i \in [k]$ and $j \in [n]$, $ S^* \cap \{u^i_j, a^i_j, b^i_j\} \neq \emptyset$. Indeed, if this is not the case, then $S^*$ contains the parent of $p_i^j$, and it is easy to verify that this leads to a situation where either $H \setminus S^*$ is disconnected or one of the guard vertices has degree two and is not the root, contradicting the assumption that $H \setminus S^*$ is a full binary tree.

From the discussion above, it is clear that $S^*$ picks at least $n$ vertices of type $i$ for each $1 \leq i \leq k$, and combined with the fact that $|S^*| \leq nk$, we note that $S^*$ does not contain any of the guard vertices. Our next claim is that for all $i \in [k]$, $G \setminus S^*$ contains at least one essential vertex of type $i$. If not, then $S^*$ contains all the neighbors of the guards of type $i$, which makes them isolated in $G \setminus S^*$--a contradiction.

For each $1 \leq i \leq k$, consider the vertex in $G$ corresponding to the essential vertex that is \emph{not} chosen by $S^*$ (in the event that there are multiple such vertices, we pick one arbitrarily). We denote this collection of vertices by $S$. We claim that $S$ induces an independent set in $G$: indeed, if not, then any edge in $G[S]$ is also present in $H \setminus S^*$ and creates a cycle when combined with the unique path connecting its endpoints via the backbone, which is again a contradiction. This concludes the proof.
\end{proof}

We now turn our attention to the edge-deletion variant. Here, we will find it convenient to reduce from a structured version of exact satisfiability, where the occurrences of the variables are bounded in frequency and also controlled in terms of how they appear. We will turn to a formal description in a moment, noting that here our reduction is similar to the one used to show that Linear-SAT is \NPC{}~\cite{ArkinBCCKMS15}.

\begin{theorem}
    \label{thm:fbtevnpc}
    \FBTDE{} is \NPC{}.
\end{theorem}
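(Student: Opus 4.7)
Membership in \NPshort{} is immediate: given a candidate edge subset $S$ of total weight at most $k$, one checks in polynomial time that $H\setminus S$ is connected and acyclic, that $r$ has degree two, and that no other vertex has degree greater than three or equal to two. For hardness, the plan is to reduce from \LESfull{} (\LES{}), a structured variant of exact satisfiability whose \NPC{}-ness is established separately and is inspired by the reduction in~\cite{ArkinBCCKMS15}; the \LES{} restriction controls both the sizes and the occurrence patterns of clauses, and this rigidity is what I will match to the structural rigidity of a full binary tree.

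Given an \LES{} instance $\varphi$ with variables $x_1,\ldots,x_n$ and clauses $C_1,\ldots,C_m$, I would construct a weighted graph $H$, a root $r$, and a budget $k$ as follows. The starting point is a sufficiently large complete binary tree, all of whose edges are weighted by a prohibitive constant $M$, so that no ``scaffold'' edge can be afforded in any feasible deletion. Certain leaves of the scaffold are earmarked as \emph{variable anchors} (two per variable, one for each literal) and as \emph{clause anchors} (one per clause). To each variable $x_i$ I attach a variable gadget glued to its two anchors containing a small number of cheap edges, arranged so that any valid solution must delete exactly one of two ``choice'' edges, with the identity of the deleted edge encoding the truth value of $x_i$. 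To each clause $C_j$ I attach a clause gadget glued both to the clause anchor and to the anchors of the variables whose literals occur in $C_j$, with the property that the degree and acyclicity constraints inside the gadget force exactly one of the literal-edges in $C_j$ to be retained. Gadgets that share a variable are linked through shared edges or forced paths so that consistency of the global Boolean assignment is enforced by acyclicity.

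The budget $k$ is calibrated to the total cheap-edge weight that a satisfying assignment would spend. In the forward direction, a satisfying assignment prescribes which cheap edge to delete in each variable gadget and which literal-edge to retain in each clause gadget, and one verifies that the resulting subgraph is a full binary tree rooted at $r$. In the reverse direction, since no scaffold edge can be deleted, the deletion set lies entirely inside the gadgets, and the structural constraints of a full binary tree on the surviving subgraph translate directly into an exact-one-true satisfying choice for each clause and a consistent Boolean value for each variable.

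The main obstacle will be designing the gadgets so that the two rigidities of the target object --- (i) connectedness together with acyclicity, and (ii) the requirement that $r$ is the unique degree-two vertex while every non-leaf has degree exactly three --- jointly rule out every deletion pattern that does not correspond to an exact-satisfying assignment. In particular, one must rule out ``cheating'' deletions that create a spurious degree-two vertex away from $r$, or that isolate a partner leaf, and one must argue that the ``exactly one true literal'' requirement cannot be relaxed to ``at least one'' or ``at most one'' by an adversarial choice of deletions. The linear occurrence bounds supplied by \LES{} are used critically to ensure that gadget sharing never inflates a vertex's degree beyond three, so that the attachments to scaffold leaves remain locally consistent with the full binary tree target.
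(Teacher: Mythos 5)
Your high-level strategy coincides with the paper's: reduce from \LES{}, build a complete binary tree backbone whose edges carry a prohibitive weight so that no solution within budget can touch them, hang cheap variable gadgets off designated leaves whose forced local deletion pattern encodes a truth value (a ``signal''), and let the degree-three/unique-degree-two rigidity of the full binary tree do the work of clause checking. However, as written the proposal has two genuine gaps. First, everything that is actually hard is deferred: you explicitly flag gadget design as ``the main obstacle'' and never construct the gadgets or verify that the enumerated cheating patterns are excluded. In the paper this is the bulk of the argument --- a shadow-variable gadget with anchors, drivers, and two entry points, for which one enumerates all sixteen ways a ``nice'' solution can hit the four active degree-two vertices and shows that only the positive- and negative-witness patterns survive the degree and connectivity constraints; and a main-variable gadget that additionally packages the four core clauses $A_i,B_i,A_i^{\prime},B_i^{\prime}$ containing $x_i$, with a forcing chain of implications ($ (v,x)\in S \Rightarrow (v,q)\in S \Rightarrow (q,A^{\prime})\in S \Rightarrow (A,y)\in S$) that pins down exactly two states. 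A budget calculation ($k=28p$, decomposed as $6p+16p+6p$) then certifies that no slack remains for deviant deletions. Without some such explicit construction and case analysis, the claim that the reduction ``translates directly'' into a satisfying assignment is not established.

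Second, your clause gadget enforces the wrong semantics. \LES{} asks for a \emph{near-exact} satisfying assignment: exactly one true literal in each two-literal core clause, but exactly \emph{two} true literals in each four-literal auxiliary clause. You describe a clause gadget in which ``exactly one of the literal-edges in $C_j$ is retained,'' i.e., an exact-one-in-all-clauses condition. The paper's treatment of an auxiliary clause $C_i$ introduces \emph{two} degree-four vertices $\omega_i$ and $\omega_i^{\prime}$, each adjacent to the positive entry points of all four shadow-variable gadgets of $C_i$; each must lose exactly three of its four incident edges (to avoid degree four and to break the cycles through the backbone), so each retains exactly one edge, and the pair together certifies that exactly two literals of $C_i$ receive a positive signal --- with the reverse direction checking that the two unretained literals cannot also be satisfied without creating a spurious degree-two entry point. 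A single-retention clause gadget would not complete the reduction from \LES{} as defined; you would either need this doubling trick or a different source problem.
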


We first describe the version of \textsc{Satisfiability} that we will reduce from. Our instance consists of $(4p+q)$ clauses which we will typically denote as follows:

$$ \CC = \{ A_1, B_1, A_1^\pr, B_1^\pr, \cdots, A_p, B_p, A_p^\pr, B_p^\pr\} \cup \{ C_1, \cdots, C_q \}$$

We refer to the first $4p$ clauses as the \emph{core} clauses, and the remaining clauses as the \emph{auxiliary} clauses. The core clauses have two literals each, and also enjoy the following structure:

$$\forall i \in [p], A_i \cap B_i = \{x_i\} \mbox{ and } A_i^\pr \cap B_i^\pr = \{\bar{x}_i\}$$

We refer to the $x_i$'s as the \emph{main} variables and the remaining variables that appear among the core clauses as \emph{shadow} variables. The shadow variables occur exactly once, and have negative polarity among the core clauses. Therefore, using $\ell(\cdot)$ to denote the set of literals occurring amongst a subset of clauses, we have:

$$ \left| \ell\left( \bigcup_{i = 1}^{p} \{ A_i, B_i, A_i^\pr, B_i^\pr \} \right) \right| = 6p.$$

The auxiliary clauses have the property that they only contain the shadow variables, which occur exactly once amongst them with positive polarity. Also, every auxiliary clause contains exactly four literals. Note that this also implies, by a double-counting argument, that $q = p$. We say that a collection of clauses is a \emph{chain} if it has all the properties described above. An instance of \LESfull{} (\LES{}) is the following: given a set of clauses that constitute a chain, is there an assignment $\tau$ of truth values to the variables such that exactly one literal in every core clause and two literals in every auxiliary clause evaluate to \true{} under $\tau$?

For ease of discussion, given an assignment of truth values $\tau$ we often use the phrase ``$\tau$ satisfies a literal'' to mean that the literal in question evaluates to true under $\tau$. For instance, the question from the previous paragraph seeks an assignment $\tau$ that satisfies exactly one literal in every core clause and two literals in every auxiliary clause. We also refer to such an assignment as a near-exact satisfying assignment. The following observation is a direct consequence of the definitions above.



\begin{proposition}
    \label{prop:brokenchainstructure}
Let $\CC$ be a collection of clauses that form a chain. For any assignment of truth values $\tau$, the main variables satisfy exactly two core clauses and the shadow variables satisfy either one core clause or one auxiliary clause.
\end{proposition}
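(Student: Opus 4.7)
The plan is to prove the proposition directly by cataloguing, for each variable type, the set of clauses in which it appears and then checking what happens to each of its literal occurrences under an arbitrary assignment $\tau$. The claim is purely a counting/structural statement, and I expect the whole argument to follow from unpacking the definition of a chain rather than from any clever manipulation.

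First, I would handle the main variables. The defining condition $A_i \cap B_i = \{x_i\}$ and $A_i^\pr \cap B_i^\pr = \{\bar{x}_i\}$ says that the variable $x_i$ occurs as a positive literal in $A_i$ and $B_i$ and as a negative literal in $A_i^\pr$ and $B_i^\pr$; moreover, since the shadow variables are defined as the non-main variables appearing in core clauses and each shadow variable is stipulated to occur exactly once and with negative polarity, $x_i$ cannot occur in any core clause besides these four. I would also note that a main variable cannot occur in an auxiliary clause, since the auxiliary clauses by definition contain only shadow variables. Therefore, in a case split on $\tau(x_i)$, if $\tau(x_i) = \true$ then precisely $A_i$ and $B_i$ are satisfied by this variable, and if $\tau(x_i) = \false$ then precisely $A_i^\pr$ and $B_i^\pr$ are satisfied, giving exactly two satisfied core clauses in either case.

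Next, I would handle the shadow variables. By the definition of a chain, a shadow variable $y$ appears exactly once among the core clauses and with negative polarity, and exactly once among the auxiliary clauses and with positive polarity; these are its only two literal occurrences in $\CC$. A straightforward case split then shows that if $\tau(y) = \true$ the single positive occurrence in an auxiliary clause is satisfied (and the negative occurrence in a core clause is not), whereas if $\tau(y) = \false$ the single negative occurrence in a core clause is satisfied (and the auxiliary occurrence is not). So $y$ satisfies exactly one clause overall, and this clause is either a core clause or an auxiliary clause depending on $\tau(y)$.

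I do not foresee a real obstacle: the only thing to be careful about is to justify that the listed occurrences of $x_i$ and $y$ are exhaustive, which is where the explicit count $|\ell(\bigcup_i \{A_i,B_i,A_i^\pr,B_i^\pr\})| = 6p$ in the chain definition is used to rule out further hidden occurrences. Once these occurrence inventories are in place, the proposition is just a one-line case analysis per variable.
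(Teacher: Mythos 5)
Your proof is correct, and it is essentially the argument the paper has in mind: the paper states this proposition without proof as "a direct consequence of the definitions," and your write-up is exactly the straightforward occurrence-inventory and case split on $\tau$ that justifies it (including the useful observation that the $6p$ literal count rules out hidden occurrences of the main variables).
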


We first establish that \LES{} is \NPC{}:

\begin{lemma}
\label{lem:lesnph} \LESfull{} is \NPC{}.
\end{lemma}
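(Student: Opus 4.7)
The plan has the standard two-part structure. Membership in NP is immediate, since a candidate assignment can be verified in polynomial time by counting the satisfied literals in every clause. For NP-hardness I would follow the template of the Linear-SAT reduction of Arkin et al.~\cite{ArkinBCCKMS15}, reducing from a bounded-occurrence variant of satisfiability whose native constraint matches the ``exactly two literals satisfied in a four-literal clause'' condition; a natural candidate is positive Exact-2-in-4-SAT, preprocessed by the usual equality-gadget reductions so that every source variable has at most two positive and at most two negative occurrences.

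The conceptual engine is a simple forcing observation about chains. Each core clause has the form $(x_i \vee \bar s)$ or $(\bar x_i \vee \bar s)$, and since exactly one of its two literals must be satisfied, the value of the main variable $x_i$ rigidly determines the value of every shadow variable attached to it: setting $x_i = T$ forces the two shadow variables in $A_i, B_i$ to be true and the two in $A_i^\pr, B_i^\pr$ to be false, and symmetrically when $x_i = F$. Each main variable therefore behaves like a hard-wired switch whose state is broadcast to four shadow variables, which in turn occur positively in auxiliary clauses. To reduce, I would introduce one main variable $x_i$ per source variable $y_i$ together with a fresh shadow variable for each of its literal occurrences, planting positive occurrences in the core clauses anchored on $x_i$ and negative occurrences in those anchored on $\bar x_i$; each source clause is then encoded as the auxiliary clause on the four relevant shadow variables.

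The main obstacle is that a chain is a tightly constrained combinatorial object, prescribing exactly two positive and two negative core occurrences per main variable, exactly one negative core occurrence and one positive auxiliary occurrence per shadow variable, exactly four literals per auxiliary clause, and the matching $p = q$. Preserving all these counting invariants simultaneously forces a careful bookkeeping of padding main variables, shadow variables and dummy auxiliary clauses, which I would verify by double-counting. Once the construction is in place, correctness follows in both directions: a satisfying assignment of the source instance canonically extends, via the switch semantics, to a near-exact satisfying assignment of the chain, while Proposition~\ref{prop:brokenchainstructure} guarantees that any near-exact satisfying assignment of the chain respects the intended switch semantics and projects back to a satisfying assignment of the source.
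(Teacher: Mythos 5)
Your reduction skeleton---one main variable per source variable, a fresh shadow variable per literal occurrence, the four core clauses $A_i = (x_i, \bar{p}_i)$, $B_i = (x_i, \bar{r}_i)$, $A_i^\pr = (\bar{x}_i, \bar{q}_i)$, $B_i^\pr = (\bar{x}_i, \bar{s}_i)$ acting as a switch that forces each shadow variable to agree in value with the literal it replaces, and each source clause re-encoded as the auxiliary clause on its four shadow variables---is exactly the construction in the paper, and your forcing argument in both directions is the right one. The gap is in your choice of source problem. What you call ``the main obstacle,'' namely making the chain's counting invariants come out right (exactly two positive and two negative core occurrences per main variable, exactly one positive auxiliary occurrence per shadow variable, four literals per auxiliary clause, and $q=p$), is left unresolved and is genuinely problematic along the route you sketch: starting from \emph{positive} Exact-2-in-4-SAT, a source variable has no negative occurrences, so the shadow variables $q_i,s_i$ sitting in $A_i^\pr,B_i^\pr$ have no auxiliary clause in which to make their mandatory positive appearance; the dummy auxiliary clauses you would introduce to absorb them must themselves consist of exactly four shadow literals of which exactly two are forced true under the switch semantics, and whether such clauses can always be assembled is precisely the bookkeeping you do not carry out. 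Note also that ``at most two'' occurrences of each polarity is not enough---the chain structure needs \emph{exactly} two of each.

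The paper sidesteps all of this by reducing from (2/2/4)-SAT: every clause has four literals, every variable occurs exactly twice positively and exactly twice negatively, and one asks for an assignment satisfying exactly two literals in every clause; this is \NPC{} by a result of Ratner and Warmuth. With that source, every counting invariant of the chain holds automatically---four shadow variables and four core clauses per main variable, each shadow variable inheriting exactly one positive auxiliary occurrence from the occurrence of $x_i$ it replaces, and $q=p$ by double counting---so no padding is needed at all. If you swap your proposed source problem for (2/2/4)-SAT, your proposal becomes the paper's proof.
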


\begin{proof}
We reduce from (2/2/4)-SAT, which is the variant of \textsc{Satisfiability} where every clause has four literals and every literal occurs exactly twice --- in other words, every variable occurs in exactly two clauses with positive polarity and in exactly two clauses with negative polarity. The question is, if there exists an assignment $\tau$ of truth values to the variables under which exactly two literals in every clause evaluate to true. This problem is known to be \NPC{}~\cite{RatnerW1986}.




Let $\phi$ be a (2/2/4)-SAT instance over the variables $V = \{x_1,\ldots,x_n\}$ and clauses $\CC = \{C_1, \ldots, C_m\}$. For every variable $x_i$, we introduce four new variables: $p_i,r_i$ and $q_i,s_i$. We replace the two positive occurrences of $x_i$ with $p_i$ and $r_i$, and the two negated occurrences of $x_i$ with $q_i$ and $s_i$. We abuse notation and continue to use $\{C_1, \ldots, C_m\}$ to denote the modified clauses. Also, introduce the clauses:\shortonly{~$ A_i = (x_i, \bar{p}_i), B_i = (x_i, \bar{r}_i), A_i^\pr = (\bar{x}_i, \bar{q}_i), B_i^\pr = (\bar{x}_i, \bar{s}_i),$}
\begin{longver}
$$ A_i = (x_i, \bar{p}_i), B_i = (x_i, \bar{r}_i), A_i^\pr = (\bar{x}_i, \bar{q}_i), B_i^\pr = (\bar{x}_i, \bar{s}_i),$$
\end{longver}
for all $1 \leq i \leq n$. Note that these collection of clauses form a chain, as required. We use $\psi$ to refer to this formula. We now turn to the argument for equivalence.

In the forward direction, let $\tau$ be an assignment that sets exactly two literals of every clause in $\phi$ to true. Consider the assignment $\zeta$ given by:

$$ \zeta(x_i) = \tau(x_i), \zeta(p_i) = \zeta(r_i) = \tau(x_i); \zeta(q_i) = \zeta(s_i) = 1 - \tau(x_i),$$

for all $1 \leq i \leq n$. It is straightforward to verify that $\zeta$ satisfies exactly one literal in every core clause and exactly two literals in every auxiliary clause.

In the reverse direction, let $\zeta$ be an assignment for the variables of $\psi$ that satisfies exactly one literal in every core clause and exactly two literals in every auxiliary clause. Define $\tau$ as the restriction of $\zeta$ on the main variables. Let $C$ be a clause in $\phi$. To see that $\tau$ satisfies exactly two literals of $C$, note that the following:

$$\zeta(p_i) = \zeta(r_i) = \zeta(x_i) = \tau(x_i); \zeta(q_i) = \zeta(s_i) = 1 - \zeta(x_i) = 1 - \tau(x_i)$$

is forced by the requirement that $\zeta$ must satisfy exactly one literal in each core clause. Therefore, if $\tau$ satisfies more or less than two literals of any clause $C$, then that behavior will be reflected exactly in the auxiliary clause corresponding to $C$, which would contradict the assumed behavior of $\zeta$. We make this explicit with an example for the sake of exposition. Let $C$ from $\phi$ be the clause $(x_1, \bar{x}_3, \bar{x}_6, x_7)$, and let the clause constructed in $\psi$ be $(p_1, q_3, q_6, r_7)$. Suppose $\tau(x_1) = \tau(x_7) = \tau(x_6) = 1$ and $\tau(x_3) = 0$. Then we have $\zeta(p_1) = \zeta(r_7) = 1$ and $\zeta(q_6) = 0$, while $\zeta(q_3) = 1$. This demonstrates that $\zeta$ satisfies three literals in the auxiliary clause corresponding to $C$, in one-to-one correspondence with the literals that are satisfied by $\tau$. This completes our argument.
\end{proof}


We now turn to a proof of~\autoref{thm:fbtevnpc}. The overall approach is the following. We will introduce a complete binary tree whose leaves will be used to represent variables using variable gadgets which will have obstructions that can be removed in a fixed number of ways, each of which corresponds to a ``signal'' for whether the variable is to be set to true or false. We will then introduce vertices corresponding to clauses that will be attached to the variable gadgets in such a way that they can only be ``absorbed'' into the rest of the structure precisely when exactly two of its literals are receiving a signal indicating that they are being satisfied.

\begin{figure}[t]
    \centering
    \includegraphics[width=0.7\textwidth]{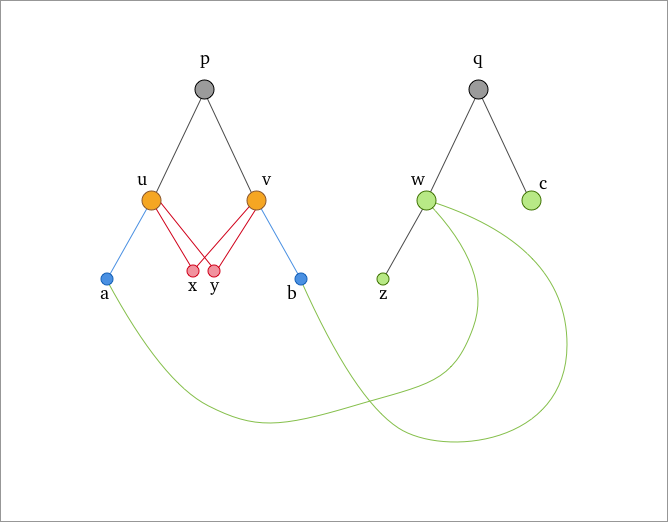}
    \caption{The gadget corresponding to the shadow variables.}
    \label{fig:shadowvargadget}
\end{figure}

\paragraph*{The shadow variables.} An instance of the gadget that we construct for the shadow variables is depicted in~\autoref{fig:shadowvargadget}. We remark here that the notation used for the vertices here is to enable our discussion of how the gadget works and is not to be confused with the notation already used to denote the variables and clauses of the \LES{} instance.

The vertices $p$ and $q$ are called the \emph{anchors} of the gadget, while the vertices $x,y,a$ and $b$ are called the \emph{drivers} of the gadget. This is because, as we will see, the behavior of the edges incident to these vertices determines the fate of the variable --- in terms of whether it ``accepts'' the vertex corresponding to the core clause or the auxiliary clause to which it belongs. We refer to the vertex $u$ in the gadget as the \emph{negative point of entry}, while the vertex $v$ is called the \emph{positive point of entry}.

We refer to the edges incident on the vertices $x,y,a$ and $b$ as \emph{active} edges and the remaining edges (i.e, $(p,u),(p,v),(q,w), (q,c)$ and $(w,z)$) as \emph{passive} edges. We say that a solution is \emph{nice} if it does not contain any passive edges. We also say that an instance $G$ of \FBTDE{} contains a clean copy of the gadget $H$ if $H$ appears in $G$ as an induced subgraph, and further, $d_G(x) = d_G(y) = d_G(a) = d_G(b) = 2$ while $d_G(w) = 4$, $d_G(c) = 1$ and none of the vertices of $H$ are chosen to be the target root vertex. We make the following observation about the behavior of this gadget.

\begin{claim}
    \label{claim:shadowgadget}
    Let $H$ be a vertex gadget for a shadow variable as defined above. Let $G$ be an instance of \FBTDE{} that contains a clean copy of $H$. Then, any nice solution $S$ contains exactly four edges among the edges of $H$.
\end{claim}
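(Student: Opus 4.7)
The plan is to pin down the behaviour of a nice solution at the four drivers and at $w$, leveraging the fact that every vertex of a full binary tree has degree $1$, $2$, or $3$, with degree $2$ reserved for the root. Consider a driver, say $x$: by the clean-copy hypothesis, $d_G(x) = 2$ and $x$ is not the designated root. Edge deletions can only decrease degrees, so the degree of $x$ in $G \setminus S$ is at most $2$; degree $2$ would force $x$ to be the root, which it is not, leaving degree $0$ or $1$. Connectivity of $G \setminus S$ (a tree on more than one vertex) rules out isolation, so $x$ must be a leaf in $G \setminus S$, meaning exactly one of its two incident edges lies in $S$. The identical reasoning applies to each of $y$, $a$, and $b$.

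Since $S$ is nice, it contains no passive edges, hence $S \cap E(H)$ consists entirely of active edges, that is, edges incident to some driver. Inspecting \autoref{fig:shadowvargadget} one verifies that the drivers are pairwise non-adjacent in $H$ and that every non-passive edge of $H$ has a driver endpoint; in particular, each active edge has a unique driver endpoint. The four driver-incident edges produced above are therefore distinct, contributing $4$ to $|S \cap E(H)|$. Conversely, no further active edge can lie in $S$: a second edge incident to any driver would isolate that driver, which the first paragraph forbids. Consequently $|S \cap E(H)| = 4$.

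As a cross-check one may examine $w$: with $G$-degree $4$ and two retained passive neighbours $q$ and $z$, the degree of $w$ in $G \setminus S$ lies in $\{2, 3, 4\}$, and since $w$ is not the root the full-binary-tree constraint forces it to be $3$. Thus exactly one of the two active edges at $w$ belongs to $S$, consistent with the driver accounting --- one of $w$'s two driver-neighbours commits its $w$-edge to $S$ while the other picks its non-$w$-edge. The main obstacle is the preliminary inspection of \autoref{fig:shadowvargadget} needed to confirm (i) that the drivers form an independent set in $H$ and (ii) that every edge of $H$ is either listed as passive or incident to a driver; once these structural facts about the gadget are in hand, the degree-counting arguments close the count at exactly $4$.
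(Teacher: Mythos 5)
Your proof is correct and takes essentially the same route as the paper's: both argue that since each driver $x,y,a,b$ has degree two in $G$ and is not the root, exactly one of its two incident edges must lie in $S$ (degree two non-root if fewer, isolation/disconnection if more), and then conclude via niceness that these four distinct active edges are all of $S \cap E(H)$. Your added observations --- that the drivers are pairwise non-adjacent so the four deleted edges are distinct, and the consistency check at $w$ --- make explicit what the paper leaves implicit, but the argument is the same.
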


\begin{proof}
Let $F$ denote the set of active edges in $G$. Since $d_G(x) = d_G(y) = d_G(a) = d_G(b) = 2$, we claim that any solution $S$ must delete \emph{exactly} four edges from $F$: in particular, $S$ contains exactly one of the edges incident to each of the vertices. Indeed, if $S$ deletes fewer edges than suggested then $G \setminus S$ contains a degree two vertex different from the root, which is a contradiction. On the other hand, if $S$ contains more than four edges from $F$, then at least one of these four vertices is isolated in $G \setminus S$, which contradicts our assumption that $G \setminus S$ is connected. This clearly implies the claim, since all edges not considered are passive and a nice solution does not contain these edges by definition.
\end{proof}

We now analyze the possible behaviors of a solution localized to the gadget in greater detail. \longonly{We refer the reader to~\autoref{fig:shadowgadgetstates} for a visual depiction of the valid states.}\shortonly{We refer the reader to the full version of this paper for the figures associated with this explanation.} \longonly{Based on Claim~\ref{claim:shadowgadget}, we know that in an instance containing a clean copy of the shadow gadget $H$, the projection of any nice solution on $H$ is one of the following:}

\begin{longver}
\begin{multicols}{4}
\begin{enumerate}
\item $xv, yv, au, bw$
\item $xu, yu, aw, bv$
\item $xu, yv, au, bw$
\item $xu, yv, aw, bv$
\item $xv, yu, au, bw$
\item $xv, yu, aw, bv$
\item $xv, yv, aw, bw$
\item $xv, yv, au, bv$
\item $xv, yv, aw, bv$
\item $xu, yu, aw, bw$
\item $xu, yu, au, bv$
\item $xu, yu, au, bw$
\item $xu, yv, au, bv$
\item $xu, yv, aw, bw$
\item $xv, yu, au, bv$
\item $xv, yu, aw, bw$
\end{enumerate}
\end{multicols}
\end{longver}

The possibilities $(xv, yv, aw, bv)$ and $(xu, yu, au, bw)$ do not arise because employing these deletions causes the entry point vertices to have degree four or more in $G \setminus S$. Further, since the solution $S$ does not involve any of the passive edges, then we also rule out the following possibilities, since they all lead to a situation where the degree of $w$ is four or more in $G \setminus S$:

\begin{multicols}{4}
\begin{itemize}
\item $xv, yv, au, bv$
\item $xu, yu, au, bv$
\item $xu, yv, au, bv$
\item $xv, yu, au, bv$
\end{itemize}
\end{multicols}

\begin{longver}
Recalling that $d_G(w) = 4$ when $H$ makes a clean appearance in $G$, we also safely rule out the following possibilities, as they result in a situation where the degree of $w$ is exactly two in $G \setminus S$ --- since $w$ is not the target root vertex, this is a contradiction as well.

\begin{multicols}{4}
\begin{itemize}
\item $xv, yv, aw, bw$
\item $xu, yu, aw, bw$
\item $xu, yv, aw, bw$
\item $xv, yu, aw, bw$
\end{itemize}
\end{multicols}
\end{longver}

\begin{shortver}
Recalling that $d_G(w) = 4$ when $H$ makes a clean appearance in $G$, we also safely rule out the possibilities: $(xv, yv, aw, bw)$, $(xu, yu, aw, bw)$, $(xu, yv, aw, bw)$, $(xv, yu, aw, bw)$. Note that they result in a situation where the degree of $w$ is exactly two in $G \setminus S$ --- since $w$ is not the target root vertex, this is a contradiction as well.
\end{shortver}

Observe that, given a nice solution $S$, in all the valid scenarios possible\longonly{ (also enumerated in~\autoref{fig:shadowgadgetstates})}, either $d_{H \setminus S}(u) = 2$ and $d_{H \setminus S}(v) = 3$, or $d_{H \setminus S}(u) = 3$ and $d_{H \setminus S}(v) = 2$. We say that a shadow variable gadget has a \emph{negative signal} in solutions where $d_{H \setminus S}(u) = 2$. Similarly, we say that the gadget has a \emph{positive signal} in the situations where $d_{H \setminus S}(v) = 2$. We refer to the edges $\{(v,x), (v,y), (u,a), (w,b)\}$ as the \emph{negative witness} and the edges $\{(u,x), (u,y), (v,b), (w,a)\}$ as the \emph{positive witness} for the shadow variable gadgets. This concludes the description of the gadget meant for shadow variables.

\begin{longver}
\begin{figure}[tbhp]
\centering
\includegraphics[width=\textwidth]{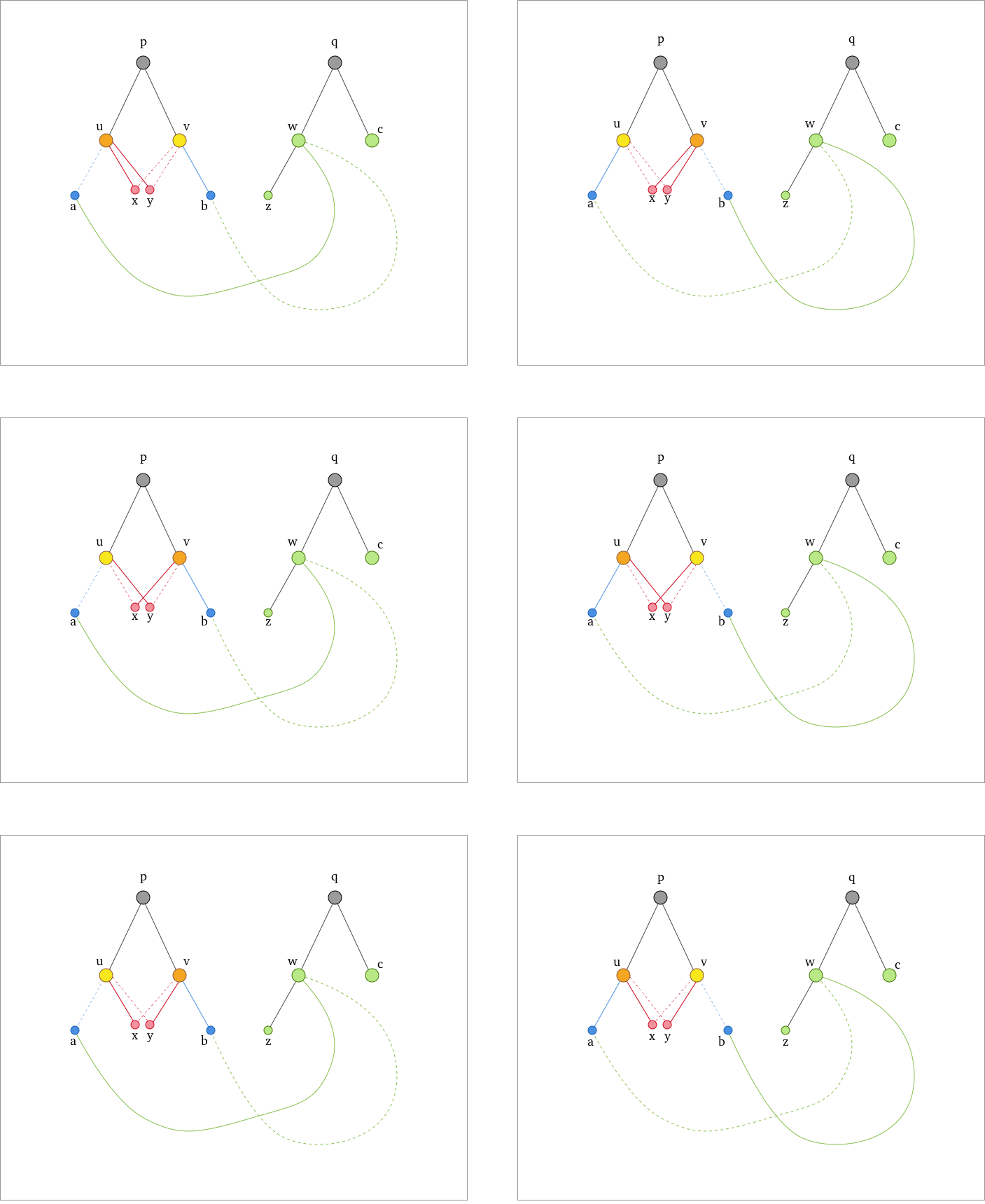}
\caption{The possible states of the gadget corresponding to shadow variables. The deleted edges are depicted by dotted lines. The only degree two vertex remaining after the deletions are made is depicted by a yellow vertex.}
\label{fig:shadowgadgetstates}
\end{figure}
\end{longver}

\paragraph*{The main variables.} We now turn our attention to the gadget corresponding to the main variables. Here, we find it convenient to incorporate vertices representing the core clauses that the main variables belong to also as a part of the gadget. The construction of the gadget is depicted in~\autoref{fig:mainvargadget}. As before, the notation used for the vertices here is to enable our discussion of how the gadget works. With the exception of $A,B, A^\pr, B^\pr$, which indeed are meaningfully associated with the analogously named core clauses, the notation is not to be confused with the notation already used to denote the variables and clauses of the \LES{} instance.

The edges $(z,u)$ and $(z,v)$ are the \emph{passive} edges of this gadget, while the remaining edges are \emph{active}. The vertex $z$ is called the \emph{anchor} of this gadget. As before, a solution is \emph{nice} if it does not contain any of the passive edges. We say that an instance $G$ of \FBTDE{} has a clean copy of $H$ if $H$ appears in $G$ as an induced subgraph and, further, $d_G(p) = d_G(q) = d_G(u) = d_G(v) = 3$, $d_G(x) = d_G(y) = 2$, $d_G(B) = d_G(A^\pr) = 2$, $d_G(A) = d_G(B^\pr) = 3$, and none of the vertices of $H$ are chosen to be the target root vertex.

We reflect briefly on the nature of a nice solution $S$ in instances that have a clean copy of a main variable gadget $H$. First, since $d_G(x) = 2$ and $x$ is not the target root vertex, we note that exactly one of $(v,x)$ or $(u,x)$ must belong to $S$. Suppose $(v,x) \in S$. The removal of $(v,x)$ makes $v$ a vertex of degree two, and since $(z,v)$ is a passive edge and $S$ is nice, $(v,q) \in S$ is forced. Along similar lines, we have that $(u,p) \notin S$. Now, we argue that $(q,B^\pr) \notin S$. Indeed, if $(q,B^\pr) \in S$, then $A^\pr$ has degree two from the deletions so far, and $q$ is a degree-one vertex with $A^\pr$ as its sole neighbor. Recalling that $A^\pr$ is not the target root vertex, we are now forced to delete exactly one of the endpoints incident on $A^\pr$, but both possibilities lead us to a disconnected graph. Therefore, $(q,B^\pr) \notin S$. It is easy to see that this forces $(q,A^\pr) \in S$ and further, $(A,y) \in S$. A symmetric line of reasoning shows that if $(u,x) \in S$, then $(p,u), (p,B)$ and $(B^\pr,y)$ are all in $S$ as well. \longonly{These states are depicted in~\autoref{fig:maingadgetstates}.}\shortonly{We refer the reader to the full version of this paper for the figures associated with this explanation.} These two scenarios motivate the definitions of positive and negative signals that we now make explicit.

With respect to a nice solution $S$, we say that a main variable gadget has a \emph{negative signal} if $d_{H \setminus S}(v) = 2$. Likewise, we say that the gadget has a \emph{positive signal} if $d_{H \setminus S}(u) = 2$. We will also refer to the set of edges $\{(v,x), (v,q), (A^\pr,q), (A,y)\}$ as the \emph{positive witness} of this gadget, and the \emph{negative witness} is defined analogously.

\begin{figure}[t]
    \centering
    \includegraphics[width=0.8\textwidth]{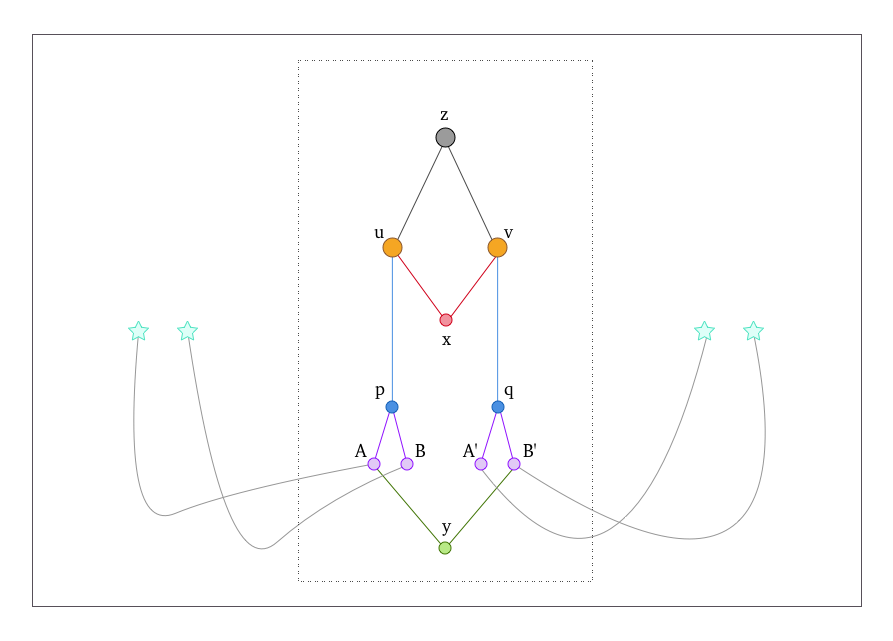}
    \caption{The gadget corresponding to the main variables.}
    \label{fig:mainvargadget}
\end{figure}

\begin{longver}
    \begin{figure}[tbhp]
        \includegraphics[width=.9\textheight,angle=90,origin=c]{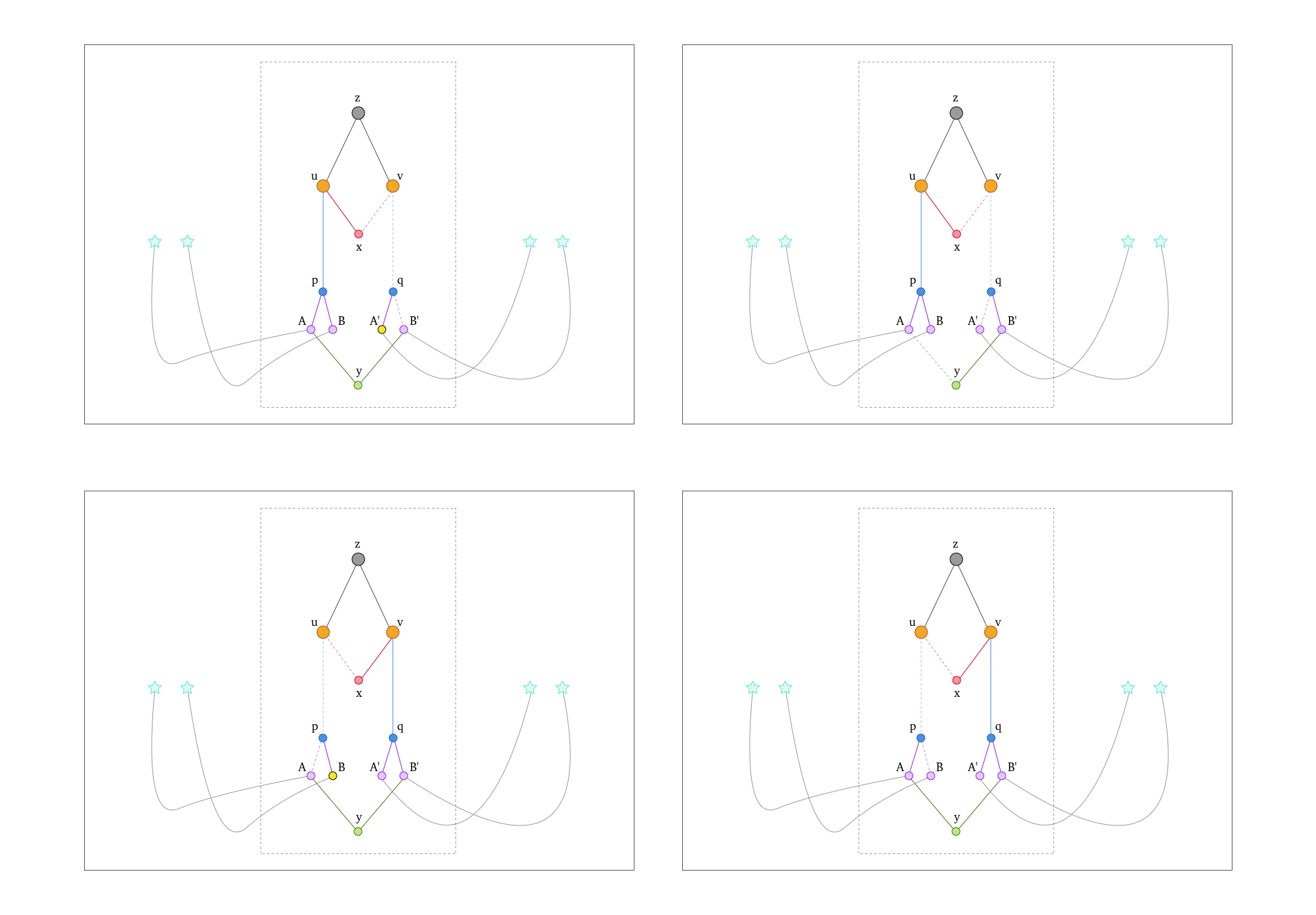}
        \caption{The possible states of the gadget corresponding to main variables. The deleted edges are depicted by dotted lines. The yellow vertices demonstrate the contradictions in the invalid states.}
        \label{fig:maingadgetstates}
    \end{figure}
\end{longver}

We are now ready to discuss the overall construction. Let $\phi$ be an instance of \LES{} with clauses given by:

$$ \CC = \{ A_1, B_1, A_1^\pr, B_1^\pr, \cdots, A_p, B_p, A_p^\pr, B_p^\pr\} \cup \{ C_1, \cdots, C_p \},$$

where the main variable common to $A_i$ and $B_i$ is denoted by $x_i$ and the auxiliary variables in these two clauses are denoted by $p_i$ and $r_i$, while the auxiliary variables in the clauses $A_i^\pr$ and $B_i^\pr$ are denoted $q_i$ and $s_i$. We denote by $\II_\phi := (G,r,w,k)$ the \LES{} instance that we will now construct based on $\phi$.

First, we construct the smallest complete binary tree with at least $(9p)$ leaves and let $\nu$ be the root of this tree. We refer to this tree as the \emph{backbone} of $G$. Let the first $9p$ leaves of this tree be denoted by $\ell_1, \ldots, \ell_p; \alpha_1, \beta_1, \ldots, \alpha_{4p}, \beta_{4p}$. For each main variable $x_i$, let $H_i$ be the corresponding gadget. We identify the anchor of $H_i$ with $\ell_i$. For each shadow variable, we introduce a gadget corresponding to it, and identify the first anchor vertex in the gadget with $\alpha_i$ and the second anchor vertex with $\beta_i$. For every core clause $A_i$, we add an edge between the vertex $A$ in the gadget corresponding to $A_i$ and the negative entry point in the gadget for the shadow variable contained in the clause $A_i$. We also do this in an analogous fashion for the core clauses $A_i^\pr$, $B_i$ and $B_i^\pr$.

Finally, for each auxiliary clause $C_i$, we introduce two vertices $\omega_i$ and $\omega_i^\pr$. We connect these vertices with the positive entry point into all gadgets corresponding to the shadow variables that belong to the clause $C_i$. Note that each of these vertices have degree four. This completes the description of the construction of the graph $G$. We note that all the gadgets present in $G$ are clean by construction. We now define the weight of every edge in the backbone and every passive edge in the gadgets as $(k+1)$, while the weights of the remaining edges are set to be one. Finally, we set $k := 28p$ and let $r = \nu$ --- this concludes the description of the instance $\II_\phi$. \longonly{We now turn to an argument of the equivalence between these instances.}\shortonly{We defer the argument of the equivalence of the instances to the full version of this paper.}

\begin{longver}
\paragraph*{The Forward Direction.} Let $\tau$ be a near-exact satisfying assignment for $\phi$. We now construct an edge deletion set $S$ to a full binary tree as follows:

\begin{enumerate}
    \item If $\tau(x_i) = 1$, then we add the edges corresponding to the positive witness for the gadget for the variable $x_i$ to $S$. Additionally, we include edges incident to to the vertices $A_i$ and $B_i$ in gadget with their other endpoints outside the gadget.
    \item If $\tau(x_i) = 0$, then we add the edges corresponding to the negative witness for the gadget for the variable $x_i$ to $S$. Additionally, we include edges incident to to the vertices $A_i^\pr$ and $B_i^\pr$ in gadget with their other endpoints outside the gadget.
    \item For a shadow variable $y$, if $\tau(y) = 1$, then we include the positive witness from the shadow gadget in $S$. Otherwise, we include the negative witness from the shadow gadget in $S$.
    \item For any auxiliary clause $C_i$, we consider the two literals, say $\ell$ and $\ell^\pr$, that belong to $C_i$ and are satisfied by $\tau$. We delete all edges incident to $\omega_i$ except the one with its other endpoint into the positive entry point of the gadget corresponding to the shadow variable $\ell$. Likewise, we delete all edges incident to $\omega_i^\pr$ except the one with its other endpoint into the positive entry point of the gadget corresponding to the shadow variable $\ell^\pr$.
\end{enumerate}

In the first two steps, we add $6p$ edges, while in the third step we add $4 \cdot (4p) = 16p$ edges to our solution. In the last step, we add three edges for each of the two copies of the $p$ auxiliary clauses, which adds up to a total of $6p$ edges. This gives us a total of $28p$ edges in our solution.

It is straightforward to check that the solution consisting of the prescribed edges indeed forms a valid deletion set that leaves behind a full binary tree: as expected, after the deletions of witness sets from each of the gadgets, only the positive or negative entry points (depending on the nature of the witness) in gadgets corresponding to shadow variables have degree two within the gadget. Consider a pair of shadow variables that appear along with a literal of a given main variable among the core clauses. As a running example, let us consider $p_i$ and $r_i$, appearing along with $x_i$ in the clauses $A_i$ and $B_i$. These variables either satisfy two core clauses or two auxiliary clauses with respect to $\tau$. In the first case, we have $\tau(x_i) = \tau(p_i) = \tau(r_i) = 0$. Observe that when a negative witness is deleted from the gadget corresponding to $x_i$, the vertices corresponding to $A_i$ and $B_i$ in the gadget stay adjacent to the negative entry point in gadgets for $p_i$ and $r_i$, where also a negative witness was removed, making this a mutually compatible set of deletions. In the second situation, $\tau(x_i) = \tau(p_i) = \tau(r_i) = 1$, implying that we removed a positive witness from the gadgets for $p_i$ and $r_i$. Let the two auxiliary clauses satisfied by these shadow variables be $C_j$ and $C_k$ (possibly $j = k$, but assume, without loss of generality, that $j \neq k$). Note that exactly one $\omega_j$ or $\omega_j^\pr$ will be adjacent to the positive entry point in the gadget for the shadow variable $p_i$, and similarly, exactly one $\omega_k$ or $\omega_k^\pr$ will be adjacent to the positive entry point in the gadget for the shadow variable $r_i$. Again, since we removed the positive witness from the gadgets for $p_i$ and $r_i$, this is a mutually compatible set of deletions. A similar argument holds for the shadow variable pair $q_i$ and $s_i$.

\paragraph*{The Reverse Direction.}

Let $S$ be a subset of edges such that $G \setminus S$ is a full binary tree and $|S| \leq 28p$. Observe that $S$ is a nice solution because the passive edges are too expensive relative to the given budget. Recall that every gadget makes a clean appearance in $G$ and therefore, $S$ induces either a positive or a negative signal on each gadget. We define $\tau$ for each variable according to these signals, which is to say that $\tau$ sets a variable to \true{} if and only if $S$ induces a positive signal on the gadget corresponding to the variable.

We also note that every vertex corresponding to auxiliary clauses must have exactly three of their incident edges in $S$. This is because any edge incident on such a vertex has its other endpoint at a positive entry point in a shadow variable gadget. Two of these edges, therefore, when considered along with the path in the backbone joining the relevant entry points, will form a cycle where all other edges are too expensive to belong to $S$. This requires $S$ to contain at least $3 \cdot 2p = 6p$ edges incident on the vertices corresponding to auxiliary clauses. Further, for every main variable gadget, it is easy to verify that $S$ contains two edges that have one endpoint inside the gadget with their other endpoint in a shadow variable gadget. This accounts for another $2p$ edges in the solution. Finally, the positive and negative signals correspond to the deletions of four edges per gadget constructed, and since there are $5p$ gadgets in all ($p$ main variable gadgets and $4p$ shadow variable gadgets), the signals account for the remaining $20p$ units of the budget. This implies that no additional edges other than the ones considered here are deleted by $S$.

We claim that $\tau$ is a near-exact satisfying assignment. First, we consider a core clause $A_i = (\bar{p}_i,x_i)$. If $S$ induced a positive signal on the gadget for $x_i$, then it remains to be seen that $S$ also induces a positive signal on the gadget for $p_i$. Suppose, for the sake of contradiction, that $S$ induces a negative signal on the gadget for $p_i$. Then, the negative entry point in the gadget for $p_i$ has degree two within the gadget (since no other edges are removed by $S$). The only other neighbor of this entry point is the core clause that it satisfies, but the positive signal within the gadget implies that the edges incident to this clause already belongs to $S$ and is therefore --- informally speaking --- not ``avaiable'' for ``fixing'' the degree requirement of the entry point in the shadow variable gadget. Thus, $A_i$ is satisfied exactly if the gadget for $x_i$ has a positive signal. A symmetric argument shows that this is also the case for the negative signal, and the argument works analogously for all the other types of core clauses.

We now turn to the auxiliary clauses. Let $C_i$ be an auxiliary clause, and let $\ell$ and $\ell^\pr$ denote the neighbors of $\omega_i$ and $\omega_i^\pr$ in $G \setminus S$. It is straightforward to infer that for these edges to be compatible with the signals in the gadgets for the shadow variables corresponding to $\ell$ and $\ell^\pr$, it must be the case that the assignment $\tau$ that we have designed, satisfies the literals $\ell$ and $\ell^\pr$ from $C_i$. Further, it is also easy to verify that the other two literals in $C_i$ are \emph{not} satisfied by $\tau$: indeed, if they are, when coupled with the fact that the remaining edges incident on $\omega_i$ and $\omega_i^\pr$ belong to $S$, we will be led to a shadow variable gadget with an entry point that has degree two in $G \setminus S$, which is a contradiction.

This concludes the proof of equivalence of the instances, and in particular, implies the NP-completeness of \FBTDE{}.
\end{longver}


\section{FPT Algorithms}
\label{sec:fpt}

We observe that the problems considered here, namely~\FBTDV{} and~\FBTDE{} are fixed-parameter tractable by the standard parameter. We briefly describe a natural branching algorithm for~\FBTDV{} while noting that an analogous argument works for~\FBTDE{}.

Let $(G = (V,E), k, r, w)$ be an instance of~\FBTDV{}. First, consider a vertex $v$, different from the designated root, that has four or more neighbors. Choose any four neighbors of $v$, say $a,b,c,d$, and branch on the set $\{v,a,b,c,d\}$ and we adjust the remaining budget by subtracting the respective weights of these vertices. The exhaustiveness of this branching rule follows immediately from the definition of a full binary tree. Along similar lines, we can also branch on the designated root along with three of its neighbors at a time, if the root has degree three or more, and also the closed neighborhoods of vertices of degree exactly two. We abort any branches where we have exhausted the budget.

We say that a graph with a designated root vertex is \emph{nice} if it is connected, its maximum degree is three and the root is only vertex of degree two. Note that the depth of the branching thus far is bounded by $O(5^k)$, and we branch appropriately on disconnected instances, noting that only one of the components can ``survive''. Also, note that all the remaining instances are nice. If any of the remaining graphs are also acyclic, then we are already done.

If not, then we branch on these graphs further as follows. We pre-process vertices of degree three with a degree one neighbor by employing an appropriate short-circuiting rule. We can then start a breadth-first search (BFS) from the root vertex, noting that the depth of the BFS tree is at most $(\log_2 n + 1)$, since every internal vertex in this tree has at least two children. Therefore, we may infer that there exists a cycle of length $O(\log n)$, and we branch on all the vertices of this cycle other than the root vertex. If the deletion of a vertex on the cycle leads to a disconnected graph, then we abort the corresponding branch. Similarly, if the deletion of a vertex on the cycle creates vertices of degree two in the resulting graph, then we branch on the closed neighborhood of these degree two vertices and discard any disconnected graphs until we arrive at a nice instance, at which point we recurse in the fashion described here. The correctness of the overall algorithm follows from the exhaustiveness of the branching rules. The fact that the running time is FPT follows by a well-known argument~\cite{RamanSS06}.

\begin{theorem}
The problems~\FBTDV{} and~\FBTDE{} are \FPT{} with respect to solution size.
\end{theorem}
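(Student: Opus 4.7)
The plan is to describe a branching algorithm whose recursion tree has depth bounded by a function of $k$, maintaining throughout the invariant that the current graph is connected with the designated root $r$ and that the remaining budget is non-negative; otherwise the branch either succeeds (graph already a full binary tree rooted at $r$) or aborts. First I would handle the local obstructions to being a full binary tree: (i) any non-root vertex $v$ of degree at least four yields a forbidden set $\{v,a,b,c,d\}$ for any four chosen neighbors $a,b,c,d$; (ii) the root $r$, if it has degree three or more, yields a forbidden set of size four among $r$ and any three neighbors; (iii) any non-root vertex of degree exactly two yields a forbidden closed neighborhood of size three. For each such obstruction the algorithm branches on the constant-size candidate set, subtracting the respective weight and recursing. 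Exhaustiveness follows directly from the definition of a full binary tree, and each branch removes at least one vertex from the solution so the depth of this phase is $\BigO(k)$.

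When a surviving branch produces a \emph{nice} instance (connected, maximum degree three, root the unique degree-two vertex) that is still cyclic, I would run BFS from $r$. Since every internal BFS-tree vertex has degree at most three and one edge used by the parent, it has at least two children, so the BFS tree has depth $\BigO(\log n)$. Any non-tree edge together with the tree path between its endpoints gives a cycle of length $\BigO(\log n)$. Since the root cannot lie in this cycle (otherwise it would have degree at least three), I branch over the non-root vertices of the cycle; each branch deletes one such vertex, decreasing the budget and potentially producing a disconnected graph (abort) or new degree-two vertices, which are cleaned up by the obstruction-branching above before the next BFS.

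The main obstacle will be obtaining a clean $f(k) \cdot n^{\BigO(1)}$ running time rather than merely $n^{f(k)}$: the cycle branching has width $\BigO(\log n)$ and depth $\BigO(k)$, which only directly yields $(\log n)^{\BigO(k)}$ leaves. I would invoke the standard dichotomy argument from~\cite{RamanSS06}: if $\log n \leq k$ then $(\log n)^{\BigO(k)}$ is already $f(k)$, while if $\log n > k$ the bound $(\log n)^k = 2^{k \log \log n} \leq n^{o(1)} \cdot f(k)$ can be handled similarly, giving an FPT running time overall. A secondary subtlety is that weights lie in $\mathbb{R}^+$, so to bound the branching depth by a function of $k$ I would parameterize by the cardinality of the solution (equivalently, assume weights bounded below by $1$; this is without loss of generality by scaling when weights are rational, and matches the usual convention for vertex-deletion parameterizations). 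Correctness reduces to verifying exhaustiveness of the branching rules, which is immediate from the forbidden-substructure characterization, and the same template yields the edge-deletion analogue~\FBTDE{} by branching on the edges of every forbidden local configuration and every short cycle produced by the BFS argument.
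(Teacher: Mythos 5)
Your proposal follows essentially the same route as the paper: branch on the constant-size local obstructions (non-root vertices of degree at least four, a high-degree root, closed neighbourhoods of non-root degree-two vertices), then on nice cyclic instances find an $\BigO(\log n)$ cycle via BFS from the root and branch on its non-root vertices, closing the analysis with the standard $(\log n)^{\BigO(k)}$ dichotomy from~\cite{RamanSS06}. One small slip: your claim that the root cannot lie on the short cycle ``otherwise it would have degree at least three'' is not right (a degree-two root can lie on a cycle), but this is immaterial since the branching correctly excludes the root anyway because it must survive in any solution.
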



\begin{longver}
\section{Concluding Remarks}

In this contribution, we considered natural variants of the \textsc{Feedback Vertex Set} and the \textsc{Feedback Edge Set} problems with the goal of deleting to structured trees. In particular, we focused our attention on full binary trees with a specified root vertex. We established the NP-hardness of these problems, both in the vertex and edge deletion settings. The latter stands in sharp contrast with the complexity of the analogous problem of deleting edges to obtain a forest (or even a tree), which is equivalent to the problem of finding a Minimum Spanning Forest (respectively, a Minimum Spanning Tree). 

Our work prompts several directions for future work. We believe that it is straightforward to extend our hardness results to the setting of unweighted graphs with no designated root vertex. We also propose considering deletions to other classes of trees, such as complete trees or $p$-ary trees (our focus was on the case $p = 2$). Further, we believe that our \FPT{} approach can be substantially improved by either branching more carefully, or by employing a careful adaptation of the iterative compression approach, which has been successful for both the \textsc{Feedback Vertex Set} and the \textsc{Tree Deletion Set} problems~\cite{RamanSS13}. Finally, we think the kernelization questions for these problems will be very interesting to investigate, possibly building on the ideas that have been used for the kernelization algorithms for \textsc{Feedback Vertex Set} and \textsc{Tree Deletion Set} problems.
\end{longver}


\shortonly{\bibliographystyle{plain}}
\bibliography{refs.bib}

\end{document}